\begin{document}
\newtheorem{theorem}{Theorem}[section]
\newtheorem{lemma}[theorem]{Lemma}
\newtheorem{definition}[theorem]{Definition}
\def\emptyset{\varnothing}
\def\setminus{\smallsetminus}
\def\loc{{\mathrm{loc}}}
\def\opp{{\mathrm{opp}}}
\def\id{{\mathrm{id}}}
\def\A{{\mathcal{A}}}
\def\B{{\mathcal{B}}}
\def\C{{\mathcal{C}}}
\def\D{{\mathcal{D}}}
\def\N{{\mathbb{N}}}
\def\Q{{\mathbb{Q}}}
\def\R{{\mathbb{R}}}
\def\Z{{\mathbb{Z}}}
\def\a{{\alpha}}
\def\e{{\varepsilon}}
\def\la{{\lambda}}
\def\isom{{\cong}}
\newcommand{\Hom}{\mathop{\mathrm{Hom}}\nolimits}
\def\qed{{\unskip\nobreak\hfil\penalty50
\hskip2em\hbox{}\nobreak\hfil$\square$
\parfillskip=0pt \finalhyphendemerits=0\par}\medskip}
\def\proof{\trivlist \item[\hskip \labelsep{\bf Proof.\ }]}
\def\endproof{\null\hfill\qed\endtrivlist\noindent}

\title{A remark on gapped domain walls between topological phases}
\author{
{\sc Yasuyuki Kawahigashi}\footnote{Supported in part by 
Research Grants and the Grants-in-Aid
for Scientific Research, JSPS.}\\
{\small Graduate School of Mathematical Sciences}\\
{\small The University of Tokyo, Komaba, Tokyo, 153-8914, Japan}
\\[0,05cm]
{\small and}
\\[0,05cm]
{\small Kavli IPMU (WPI), the University of Tokyo}\\
{\small 5-1-5 Kashiwanoha, Kashiwa, 277-8583, Japan}\\
{\small e-mail: {\tt yasuyuki@ms.u-tokyo.ac.jp}}}
\maketitle{}
\begin{abstract}
We give a mathematical definition of a gapped domain wall between
topological phases and a gapped boundary of a topological phase.
We then provide answers to some recent questions
studied by Lan, Wang and Wen in condensed matter physics
based on works of Davydov, M\"uger, Nikshych and Ostrik. 
In particular, we identify their tunneling  matrix 
and a coupling matrix of Rehren, and show that
their conjecture does not hold.
\end{abstract}

\section{Introduction}

The modern subfactor theory was initiated by Jones \cite{J}
and deep connections to chiral conformal field theory
have been known.  (See \cite{EK} for a general subfactor
theory and \cite{K} for connections to conformal field theory.)
A natural language for studying such connections is
that of a tensor category.  A unitary fusion category
\cite{ENO} is particularly natural for subfactor theory
and a unitary modular tensor category \cite{BK}, \cite{T}
is particularly natural for the operator algebraic
study of chiral conformal field theory \cite{KLM}.

Recently, studies of topological phases, condensation,
gapped domain walls, and gapped boundaries have been
made in the language of modular tensor categories
\cite{BS}, \cite{FSV}, \cite{HW1}, \cite{HW2}, \cite{KK},
\cite{Ko}, \cite{LW}, \cite{LWW}.  The mathematical 
structures appearing in these studies are the same
as those appearing in subfactors and
chiral conformal field theory.  One dictionary between
$\a$-induction for subfactors and condensation has
been supplied by us in \cite[page 440]{Ko}.

In this note, we further give a correspondence between
studies of subfactors and
chiral conformal field theory and those of
gapped domain walls and gapped boundaries.  In particular,
we answer some question raised in \cite{LWW} and
show that a conjecture in \cite{LWW} does not hold.

A part of this work was done during a stay at Universit\`a di
Rome ``Tor Vergata''.  The author thanks for the hospitality.
The author also thanks R. Longo, Y. Ogata and Z. Wang for
stimulating discussions related to the topics.
The author also thanks M. Barkeshli, M. Bischoff, J. Fuchs,
L. Kong, T. Lan, C. Schweigert, Y. Wan, J. Wang,
and X.-G. Wen for comments on the first version of this paper.

\section{Preliminaries}

We start with a unitary modular tensor category $\C$ as in 
\cite{BK}, \cite{ENO}, \cite{T}.  
Each object $\la$ in $\C$ has $\dim \la\in\R_+$ and 
$\C$ has $\dim\C=\sum_\la (\dim \la)^2$, where $\la$ is
a simple object in $\C$.  As in
\cite{BS}, \cite{HW1}, \cite{HW2}, \cite{Ko}, 
\cite{LW}, \cite{LWW},
a topological phase is represented by a unitary modular
tensor category.
Note that we have the additive central charge of $\C$
in $\Q/8\Z$ as after Lemma 5.25 in \cite{DMNO}.

The following definition was introduced in 
\cite[Definition 5.1]{DMNO}, where $\D^\opp$ means the
modular tensor category with the braiding reversed.

\begin{definition}{\rm
We say that unitary modular tensor categories $\C$
and $\D$ are {\it Witt equivalent} if
$\C\boxtimes \D^\opp$ is equivalent to the quantum double
of some unitary fusion category.
}\end{definition}

The following is due to \cite[Theorem 6.1]{L}, 
\cite[Theorem 4.9]{LR}.

\begin{definition}{\rm
A $Q$-{\it system} $(\theta, v, w)$ in a unitary modular tensor
category $\C$ is a triple of an object $\theta$ in $\C$,
$v\in\Hom(\id,\theta)$,
$w\in\Hom(\theta,\theta^2)$ satisfying the following identities.
\begin{align*}
(v^*\times 1) w &=(1\times v^*)w \in\R_+,\\
(w\times 1)w&=(1\times w)w.
\end{align*}
We say a $Q$-system $(\theta, v, w)$ is {\it irreducible}
when $\Hom(\id,\theta)$ is 1-dimensional.
We say a $Q$-system $(\theta, v, w)$ is {\it local}
if we have $w=\e(\theta,\theta)w$, where $\e$ denotes the
braiding of $\C$.
}\end{definition}

An irreducible local $Q$-system is the same as a
{\it condensable algebra} in \cite[Definition 2.6]{Ko}.
If we have an irreducible local $Q$-system $(\theta, v, w)$
in a unitary modular tensor category $\C$, we can realize $\C$
as a category of endomorphisms of a type III factor $N$ and
apply the machinery of the $\alpha$-induction as in
\cite{LR}, \cite{X}, \cite{BE1}, \cite{BE2}, \cite{BE3},
\cite{BEK1}, \cite{BEK2}, \cite{BEK3}.
The study of condensation in the setting of a modular
tensor category is parallel to the study of $\a$-induction
as in the table in \cite[page 440]{Ko} supplied by us.
(The $\a$-induction also works in a more abstract setting
of tensor categories.  See \cite[Section 5]{O}.)

If we have an irreducible local $Q$-system $(\theta, v, w)$
in a modular tensor category $\C$, we have an
extension modular tensor category $\tilde\C$ as the
{\it ambichiral system} in the sense of \cite[page 741]{BEK2}
based on $\alpha$-induction.
(It was proved to be a modular tensor category in
\cite[Theorem 4.2]{BEK2}.)
If a modular tensor category $\C$ is a representation
category of a completely rational local conformal net
$\{A(I)\}$ in the sense of \cite{KLM}, then an 
irreducible local $Q$-system produces an extension
$\{\tilde\A(I)\}$ of $\{\A(I)\}$, and $\tilde \C$ is
the representation category of $\{\tilde\A(I)\}$.
In \cite[Definition 1.1]{KO}, a  local $Q$-system 
is called a commutative associative algebra $A$ in 
a modular tensor category $\C$, and the extension
modular tensor category $\tilde\C$ here is denoted
by $\C_A^\loc$ in \cite{KO}.

The following is introduced in \cite[Definition 4.6]{DMNO}.

\begin{definition}{\rm
A $Q$-system $(\theta,v,w)$ in a unitary modular tensor
category $\C$ is called {\it Lagrangian}
if we have $(\dim\theta)^2=\dim\C$.
}\end{definition}

Note that this condition is equivalent to the one that
the extension category $\tilde \C$ arising from $(\theta,v,w)$
is trivial in the sense $\dim\tilde\C=1$.

The following has been proved in \cite[Section 4]{DMNO}.
Here our ``quantum double'' is often called the
Drinfeld center in literature, and has been studied
in \cite{I} well in the setting of the Longo-Rehren 
subfactor \cite{LR}.

\begin{theorem}\label{double-ch}
A unitary modular tensor category has an irreducible local
Lagrangian $Q$-system if and only if it is the quantum double
of some unitary fusion category.
\end{theorem}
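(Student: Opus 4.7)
The plan is to prove both directions via the $\a$-induction/Drinfeld-center dictionary cited in Section~2. For the ``if'' direction, assume $\C \isom Z(\A)$ for some unitary fusion category $\A$. I would exhibit the canonical Lagrangian $Q$-system as follows: the forgetful functor $F : Z(\A) \to \A$ admits a two-sided adjoint $I$, and $\theta := I(\id_\A)$ is naturally a commutative, separable, connected algebra in $Z(\A)$, i.e.\ an irreducible local $Q$-system in the sense of Definition~2.2. That it is Lagrangian follows from the standard computation $\dim\theta = \dim\A$ (via Frobenius reciprocity, $\Hom(\id,\theta) \isom \Hom(F(\id),\id_\A)$, applied to each simple) combined with $\dim Z(\A) = (\dim\A)^2$; equivalently, by the remark after Definition~2.3, passing to the extension $\tilde\C$ obtained from $\theta$ collapses $Z(\A)$ to the trivial category.

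For the ``only if'' direction, assume an irreducible local Lagrangian $Q$-system $(\theta, v, w)$ in $\C$. I would form the unitary fusion category $\A := \C_\theta$ of right $\theta$-modules in $\C$, realized via $\a$-induction as in \cite{BEK2}, and then identify
\begin{equation*}
Z(\C_\theta) \;\isom\; \C \boxtimes \tilde\C^\opp.
\end{equation*}
This is the central DMNO identity for an irreducible local $Q$-system in a unitary modular category. Since $\theta$ is Lagrangian, the comment after Definition~2.3 gives $\tilde\C$ trivial, so $Z(\C_\theta) \isom \C$, as desired.

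The main obstacle is establishing the center identity $Z(\C_\theta) \isom \C \boxtimes \tilde\C^\opp$. One natural route is to construct the equivalence via $\a^\pm$-induction: the two inductions embed $\C$ and $\tilde\C^\opp$ into $Z(\C_\theta)$ with mutually transparent (centralizing) images, producing a fully faithful braided functor out of $\C \boxtimes \tilde\C^\opp$; one then matches global dimensions to upgrade fully faithfulness to an equivalence, using modularity of $\C$ and the dimension formulas in \cite[Theorem~4.2]{BEK2}. The Lagrangian hypothesis enters only at this final dimension comparison, which forces $\tilde\C$ to be trivial and allows the center of $\C_\theta$ to be recovered from $\C$ alone.
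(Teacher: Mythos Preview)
Your argument is correct and is essentially the DMNO proof that the paper attributes to \cite[Section~4]{DMNO} immediately before the theorem statement. The paper's own justification for the unitary case differs only in packaging: for the ``if'' direction it invokes Izumi's description \cite{I} of the Longo--Rehren subfactor, which is precisely your algebra $\theta=I(\id_\A)$ translated into operator-algebraic language; for the ``only if'' direction it cites \cite[Corollary~3.10]{BE4} directly rather than passing through the center identity $Z(\C_\theta)\isom\C\boxtimes\tilde\C^\opp$ that you use. Both routes rest on the same $\a^\pm$-induction machinery you sketch in your final paragraph, so there is no substantive mathematical difference---your write-up unpacks what the paper compresses into two citations, while the paper's choice of references has the minor advantage of dealing with unitarity on the nose without needing to say separately that the DMNO argument preserves unitarity.
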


Since we are interested in only unitary categories, the ``if'' part
of the above theorem follows from the description of the Longo-Rehren
subfactor in \cite{I}, and the ``only if'' part follows from
\cite[Corollary 3.10]{BE4}.

\section{Gapped domain walls and $Q$-systems}

Based on arguments in \cite{LWW}, we make the following 
definition.

\begin{definition}{\rm
For two unitary modular tensor categories $\C$ and $\D$,
an irreducible Lagrangian local $Q$-system in
$\C\boxtimes \D^\opp$ is called a {\it gapped domain wall}
between $\C$ and $\D$.  If $\D$ is trivial in the sense
$\dim \D=1$, we say the $Q$-system
is a {\it gapped boundary} of $\C$.
}\end{definition}

Such a $Q$-system is of the form 
$(\bigoplus Z_{\la,\mu} \lambda\boxtimes\mu, v, w)$,
where $\la$ and $\mu$ label the  simple objects of $\C$
and $\D$, respectively.  The unitary modular tensor categories
here represent {\it topological phases}.
The gapped domain wall here is the same as the Witt
trivialization of $\C\boxtimes \D^\opp$ 
in the sense of \cite[Definition 2.16]{FSV}. See \cite{FSV}
for a formulation of Witt trivialization in connection
to defect Wilson lines.
In \cite{LWW}, the matrix $Z$ is called a {\it tunneling matrix}.
Note that a gapped domain wall between $\C$ and $\D$ is
the same as a gapped boundary of $\C\boxtimes \D^\opp$.
This is compatible with the ``folding trick'' in \cite{KK}
mentioned in \cite{LWW}.  The above definition is also
compatible with the works \cite{HW1}, \cite{HW2}, \cite{LW}.
(See explanations on condensation in \cite{Ko}.
Also see earlier works \cite{BJQ}, \cite{Le} for the
abelian phases.)

If $\C$ and $\D$ are the representation categories of local
conformal nets $\{\A(I)\}$, $\{\B(I)\}$, respectively, a
domain gapped wall between them 
is the situation considered in \cite{R}.  In the context
of \cite{R}, the matrix $Z$ was called a {\it coupling matrix}.

We prepare a simple lemma.

\begin{lemma}\label{easy}
Let $(\theta, v, w)$ be a local $Q$-system in a modular
tensor category $\C$.  For simple objects $\la,\mu\in\C$,
we have $\langle \theta,\la\mu\rangle\ge
\langle \theta,\la\rangle
\langle \theta,\mu\rangle$, where
$\langle\la,\mu\rangle=\dim\Hom(\la,\mu)$.
\end{lemma}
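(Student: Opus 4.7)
The plan is to recognize $\langle\theta,\la\rangle$ as the number of copies of the tensor unit in the fusion category $\C_\theta$ of right $\theta$-modules, applied to the free module $\la\theta$, and then deduce the inequality from monoidality. This requires the $Q$-system to be irreducible---a hypothesis implicit in the lemma; without it, $\theta=\id\oplus\id$ in $\C=\mathrm{Vec}$ makes $\langle\theta,\id\rangle^2=4$ but $\langle\theta,\id\cdot\id\rangle=2$, violating the claim. So I tacitly assume $\dim\Hom(\id,\theta)=1$.

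Under this assumption, the plan plays out as follows. Locality of $(\theta,v,w)$ turns $\C_\theta$ into a fusion category with tensor product $\otimes_\theta$ and unit $\theta$ (the left action of $\theta$ on a right module is recovered via the braiding, which is exactly the step that fails in the non-local case). Irreducibility gives $\Hom_{\C_\theta}(\theta,\theta)=\Hom_\C(\id,\theta)=\mathbb{C}$, so $\theta$ is simple in $\C_\theta$. The free-module functor $F\colon\C\to\C_\theta$, $F(X)=X\theta$, is monoidal, and by the free/forgetful adjunction
\[
\dim\Hom_{\C_\theta}(F(\la),\theta)=\dim\Hom_\C(\la,\theta)=\langle\theta,\la\rangle,
\]
which is precisely the multiplicity of $\theta$ as a direct summand of $F(\la)$. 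Writing $F(\la)\isom\langle\theta,\la\rangle\,\theta\oplus X_\la$ and similarly for $\mu$, monoidality of $F$ together with the fact that $\theta$ is the unit of $\otimes_\theta$ gives
\[
F(\la\mu)\isom F(\la)\otimes_\theta F(\mu)\supseteq\bigl(\langle\theta,\la\rangle\,\theta\bigr)\otimes_\theta\bigl(\langle\theta,\mu\rangle\,\theta\bigr)\isom\langle\theta,\la\rangle\langle\theta,\mu\rangle\,\theta.
\]
Comparing multiplicities of $\theta$ on both sides yields the lemma.

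The only nontrivial inputs are the monoidal structure on $\C_\theta$, the monoidality of $F$, and the simplicity of $\theta$ in $\C_\theta$---all standard Kirillov-Ostrik and $\alpha$-induction facts recalled in the preliminaries; this is where the whole weight of the proof sits. A hands-on variant in the subfactor realization $N\subset M$, $\theta=\bar\iota\iota$, proceeds identically: the identity $\iota\la=\alpha_\la^+\iota$ together with monoidality $\alpha_{\la\mu}^+=\alpha_\la^+\alpha_\mu^+$ (enabled by locality) shows that $\alpha_\la^+\alpha_\mu^+\iota$ contains at least $\langle\alpha_\la^+\iota,\iota\rangle\langle\alpha_\mu^+\iota,\iota\rangle=\langle\theta,\la\rangle\langle\theta,\mu\rangle$ copies of $\iota$, and this count equals $\langle\iota\la\mu,\iota\rangle=\langle\theta,\la\mu\rangle$.
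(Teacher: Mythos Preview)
Your argument is correct, and your observation that irreducibility of the $Q$-system is tacitly required (with the $\theta=\id\oplus\id$ counterexample) is well taken; the paper uses the lemma only for irreducible local $Q$-systems, and its own proof needs irreducibility of $\iota$ in exactly the same way.

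The paper's route is, however, more bare-handed than either of yours. It realizes $\theta=\bar\iota\iota$ for the inclusion $\iota:N\hookrightarrow M$ and then applies Frobenius reciprocity three times:
\[
\langle\theta,\la\mu\rangle=\langle\iota\bar\mu,\iota\la\rangle,\qquad
\langle\theta,\la\rangle=\langle\iota,\iota\la\rangle,\qquad
\langle\theta,\mu\rangle=\langle\iota,\iota\bar\mu\rangle,
\]
and finishes with the trivial semisimple-category inequality $\langle X,Y\rangle\ge\langle\sigma,X\rangle\langle\sigma,Y\rangle$ for any simple $\sigma$ (here $\sigma=\iota$ in the category of $N$-$M$ sectors). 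No tensor structure on the module category, no monoidality of a free functor, and no $\alpha$-induction are invoked; in particular the paper's proof never uses locality, so it actually establishes the inequality for \emph{any} irreducible $Q$-system. Your categorical proof genuinely relies on $F(\la\mu)\isom F(\la)\otimes_\theta F(\mu)$, which is where the commutativity of $\theta$ enters; your subfactor variant brings in $\alpha$-induction, whose monoidality $\alpha^+_{\la\mu}=\alpha^+_\la\alpha^+_\mu$ in fact holds regardless of locality, so the parenthetical ``enabled by locality'' slightly overstates what is needed there. The trade-off: your argument makes explicit the conceptual reason (the unit of $\C_\theta$ is simple and $F$ is monoidal), and it plugs directly into the Kirillov--Ostrik/$\alpha$-induction framework used in the rest of the paper; the paper's argument is shorter, needs less machinery, and yields a slightly stronger statement.
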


\begin{proof}
The object $\theta$ can be regarded as a dual canonical
endomorphism of a subfactor $N\subset M$.
Let $\theta=\bar\iota\iota$ for $\iota:N\hookrightarrow M$.
Then we have the following by the Frobenius reciprocity.
\begin{align*} 
\langle \bar\iota\iota,\la\mu\rangle&=
\langle \iota\bar\mu,\iota\la\rangle,\\
\langle \bar\iota\iota,\la\rangle
&=\langle\iota,\iota\la \rangle,\\
\langle \bar\iota\iota,\mu\rangle&
=\langle \iota,\iota\bar\mu\rangle.
\end{align*} 
These imply the desired inequality.
\end{proof}

We now have the following result corresponding to the
main result in \cite{LWW} where the three conditions 
below are given.

\begin{theorem}\label{LWW-cond}
Let $(\bigoplus Z_{\la,\mu} \lambda\boxtimes\mu, v, w)$
be a gapped domain wall between $\C$ and $\D$.
Let $S^\C, T^\C, N^\C$ [resp. $S^\D, T^\D, N^\D$] be
the $S$-matrix, the $T$-matrix, the fusion rule coefficients
of $\C$ [resp. $\D$].  We then have the following.
\begin{itemize}
\item[(1)] $Z_{\la\mu}\in\N$.
\item[(2)] $S^\C Z=ZS^\D$, $T^\C Z=ZT^\D$.
\item[(3)] $Z_{\la\mu}Z_{\la'\mu'}
\le\sum_{\la'',\mu''}(N^\C)_{\la\la'}^{\la''}Z_{\la''\mu''}
(N^\D)_{\mu\mu'}^{\mu''}$.
\end{itemize}
\end{theorem}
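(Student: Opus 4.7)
The plan is to treat the three items in order of increasing difficulty: (1) is essentially by definition, (3) is a direct application of Lemma~\ref{easy} in the product category, and (2) is the main substantive claim, which I would reduce to the standard modular invariance of Rehren's coupling matrix via $\alpha$-induction.

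Items (1) and (3) are quick. For (1), $Z_{\la,\mu}=\dim\Hom(\la\boxtimes\mu,\theta)=\langle\theta,\la\boxtimes\mu\rangle$ is by definition a non-negative integer, since $\theta$ is an object in the semisimple category $\C\boxtimes\D^\opp$. For (3), I would apply Lemma~\ref{easy} inside the unitary modular tensor category $\C\boxtimes\D^\opp$, taking the two simple objects to be $\la\boxtimes\mu$ and $\la'\boxtimes\mu'$. Since $\D^\opp$ has the same fusion rules as $\D$, the tensor product factorizes as
\[
(\la\boxtimes\mu)(\la'\boxtimes\mu')=\bigoplus_{\la'',\mu''}(N^\C)_{\la\la'}^{\la''}\,(N^\D)_{\mu\mu'}^{\mu''}\,\la''\boxtimes\mu''.
\]
Taking $\langle\theta,\cdot\rangle$ on both sides and invoking the inequality of Lemma~\ref{easy} yields exactly the stated inequality.

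For (2), I would first realize the irreducible local $Q$-system $(\theta,v,w)$ in $\C\boxtimes\D^\opp$ as a type III subfactor $N\subset M$ as in the discussion after Definition~2.2, and then identify $Z$ with the coupling matrix of Rehren \cite{R} via the $\alpha$-induction machinery of \cite{BEK1}, \cite{BEK2}: concretely, $Z_{\la,\mu}$ coincides with the multiplicity $\langle\a^+_{\la\boxtimes 1},\a^-_{1\boxtimes\mu}\rangle$ by Frobenius reciprocity together with the factorization of $\la\boxtimes\mu$. The commutation relations $S^\C Z=ZS^\D$ and $T^\C Z=ZT^\D$ then become the standard modular invariance statements in this setting: the $T$-invariance reflects that locality of the $Q$-system forces the ribbon twist to act trivially on $\theta$, and the $S$-invariance is the content of \cite{R}, equivalently of the $\alpha$-induction modular invariance in \cite{BEK2}.

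The only nontrivial step is (2), and even there the work is really translation between frameworks: once the subfactor realization and the identification with Rehren's matrix are in place, the modular invariance is imported from existing work. Items (1) and (3) are essentially formal, with Lemma~\ref{easy} carrying the entire weight of (3).
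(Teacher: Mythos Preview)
Your treatment of (1) and (3) is correct and matches the paper's; your argument for (3) is in fact the same application of Lemma~\ref{easy}, just packaged more directly by working in $\C\boxtimes\D^\opp$ from the start rather than unwinding the fusion coefficients via Frobenius reciprocity as the paper does.

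For (2), your $T$-invariance argument is fine: locality of $\theta$ forces trivial twist on each simple summand $\la\boxtimes\mu$, and in $\C\boxtimes\D^\opp$ this means $T^\C_\la=T^\D_\mu$ whenever $Z_{\la\mu}\neq 0$. The $S$-part, however, has a gap. Neither reference you cite delivers $S^\C Z=ZS^\D$ directly. Rehren's result \cite{R} is formulated for local conformal nets, not abstract unitary modular tensor categories; the paper explicitly flags the net setting as a special case in the remark following the theorem. The $\a$-induction modular invariance of \cite{BEK1}, \cite{BEK2} produces a \emph{square} matrix $\tilde Z_{\sigma,\tau}=\langle\a^+_\sigma,\a^-_\tau\rangle$ indexed by simple objects of $\C\boxtimes\D^\opp$ and commuting with $S^{\C\boxtimes\D^\opp}$, not a rectangular matrix intertwining $S^\C$ with $S^\D$. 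Your proposed identification $Z_{\la\mu}=\langle\a^+_{\la\boxtimes 1},\a^-_{1\boxtimes\mu}\rangle$ is asserted rather than proved, and even granting it, restricting the relation $S^{\C\boxtimes\D^\opp}\tilde Z=\tilde Z\, S^{\C\boxtimes\D^\opp}$ to the $(\la\boxtimes 1,\,1\boxtimes\mu)$ entries does not isolate $S^\C Z=ZS^\D$: the sum over intermediate objects runs over all of $\C\boxtimes\D^\opp$, and you would need to know that the off-block contributions vanish or decouple.

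The paper closes exactly this gap via the structure theory of \cite{DNO}. Theorem~\ref{double-ch} gives Witt equivalence of $\C$ and $\D$, and then \cite[Proposition~3.7, Corollary~3.8]{DNO} show that every Lagrangian $Q$-system in $\C\boxtimes\D^\opp$ arises from local $Q$-systems in $\C$ and in $\D$ separately, with equivalent categories of local modules. This is precisely the two-sided ``type I'' situation to which \cite[Theorem~6.5]{BE4} applies, yielding the rectangular commutation $S^\C Z=ZS^\D$, $T^\C Z=ZT^\D$. The missing ingredient in your sketch is this structural decomposition; without it, the Lagrangian hypothesis is never actually used in your $S$-argument, yet it is essential, since for a general (non-Lagrangian) local $Q$-system in $\C\boxtimes\D^\opp$ no such rectangular intertwining relation need hold.
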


\begin{proof}
Condition (1) is obvious.

By Theorem \ref{double-ch}, we know that 
$\C$ and $\D$ are Witt equivalent.  Then
\cite[Proposition 3.7, Corollary 3.8]{DNO} specifies
the form of the $Q$-system.  Then \cite[Theorem 6.5]{BE4}
implies the desired commutativity (2).

Since $Z_{\la''\mu''}=
\langle \theta(\bar\la''\boxtimes\id),
\id\boxtimes\mu''\rangle$, we have
$$\sum_{\mu''}Z_{\la''\mu''}(N^\D)_{\mu\mu'}^{\mu''}=
\langle \theta(\bar\la''\boxtimes\id),\id\boxtimes \mu\mu'\rangle
=\langle \theta(\id\boxtimes\bar\mu'\bar\mu),
\la''\boxtimes\id\rangle.$$
Then the right hand side of (3) is equal to the following.
$$\langle \theta(\id\boxtimes\bar\mu'\bar\mu),
\la\la'\boxtimes\id\rangle=\langle\theta,
\la\la'\boxtimes\mu\mu'\rangle.$$
Since the left hand side of (3) is equal to
$\langle \theta,\la\boxtimes\mu\rangle
\langle \theta,\la'\boxtimes\mu'\rangle$,
Lemma \ref{easy} gives the desired inequality.
\end{proof}

In the case where $\C$ and $\D$ are the representation
categories of completely rational local conformal nets,
condition (2) was obtained by M\"uger.  
(See \cite[Theorem 3.1]{KL2}.)

The first half of Question (Q1) in \cite{LWW} asks when
we have a gapped domain wall between $\C$ and $\D$.  As we
have seen, the
answer is given by Theorem \ref{double-ch}.  That is,
we have one if and only if $\C$ and $\D$ are Witt
equivalent.  This necessity was first pointed out in
\cite[pagr 560]{FSV}.

The second half of Question (Q1) in \cite{LWW}
asks how many gapped domain walls we have between
$\C$ and $\D$.  This is a finite combinatorial problem
on $Q$-systems as shown in \cite{IK} since we have only
finitely many matrices $(Z_{\la\mu})$ for a given $\C$
and $\D$, but counting in general can be highly complicated, and
we have no easy general methods.

It is conjectured in \cite{LWW} that these three conditions
are sufficient to have a gapped domain wall, but the example
in \cite{D2} shows that this is not true.  The examples
in \cite{D2} shows that the charge conjugation modular invariant
of the quantum double of some finite group does not have
a corresponding $Q$-system structure.  Note that the
charge conjugation modular invariant for $\C=\D$
obviously satisfies
(1) and (2) above, and it is also easy to see that it 
satisfies (3).  So this is an example of a matrix $Z$ satisfying
(1), (2) and (3) above, but it does not correspond to any
gapped domain wall.  (Note that the identity matrix in \cite{D2}
corresponds to the charge conjugation matrix here due to
a different convention.)  
Our formulation in terms of a $Q$-system
is the same as the one in terms of a condensable
algebra in the sense of \cite{Ko}, so the hope in \cite{LWW}
to have a condensable algebra from some data satisfying 
(1), (2) and (3)  does not work.
From various classification results
related to modular invariants in \cite{EP}, \cite{KL1},
\cite{KL2}, it looks impossible to have a sufficient condition
for a gapped domain wall between $\C$ and $\D$ to exist
simply by working on the tunneling matrix $(Z_{\la\mu})$,
and it is expected we have many other examples of 
matrices $Z$ satisfying (1), (2) and (3) which do not
correspond to gapped domain walls.
Also as remarked in \cite{LWW} based on \cite{D1},
the matrix $(Z_{\la\mu})$ does not have enough information to
recover the $Q$-system.  This is also compatible with
a well-known fact in the study of $Q$-systems. 
Also see \cite[Example 5.8]{BKL}.

\end{document}